\newtheorem{theorem}{Theorem}
\newtheorem{lemma}[theorem]{Lemma}
\newtheorem{question}{Question}
\DeclareMathOperator{\area}{area}
\DeclareMathOperator{\dist}{dist}
\def\EE{\mathbb{E}}
\def\RR{\mathbb{R}}
\def\SS{\mathbb{S}}
\def\AA{{\cal A}}
\def\BB{{\cal B}}
\def\CC{{\cal C}}
\def\FF{{\cal F}}
\def\HH{{\cal H}}
\def\II{{\cal I}}
\def\LL{{\cal L}}
\def\Zj{V_{\le j}}
\newcommand{\risingfactorial}[1]{^{\overline{#1}}}
\def\inst#1{$^{#1}$}
\date{}
\title{On the Average Complexity of the $k$-Level\thanks{
    M.-K.\ Chiu was supported  by ERC StG 757609. S.\ Felsner
    and M.\  Scheucher were supported  by DFG Grant  FE~340/12-1.  R.\
    Steiner was supported by DFG-GRK 2434. 
    P.\ Schnider was supported by the SNSF Project 200021E-171681.
    P.\ Valtr was supported by
    the grant no.~18-19158S of the Czech Science Foundation (GA\v{C}R)
    and by  the PRIMUS/17/SCI/3  project of Charles  University.  This
    work was initiated at a workshop of the collaborative DACH project
    \emph{Arrangements and  Drawings} in Schloss St.~Martin,  Graz. We
    thank the organizers for the  inspiring atmosphere.  We also thank
    Birgit Vogtenhuber for interesting discussions.}}
\begin{document}

\author{
Man-Kwun Chiu\inst{1}
\and
Stefan Felsner\inst{2}
\and 
Manfred Scheucher\inst{2}
\and
Patrick Schnider\inst{3}
\and 
Raphael Steiner\inst{2}
\and
Pavel Valtr\inst{4}
}

\maketitle

\begin{center}
{\footnotesize
\inst{1}
Department of Mathematics and Computer Science, \\
Freie Universit\"at Berlin, Germany, \\
\texttt{\{chiumk\}@zedat.fu-berlin.de}
\\\ \\
\inst{2} 
Institut f\"ur Mathematik, \\
Technische Universit\"at Berlin, Germany,\\
\texttt{\{felsner,scheucher,steiner\}@math.tu-berlin.de}
\\\ \\
\inst{3} 
Department of Computer Science, \\
ETH Z\"urich, Switzerland\\
\texttt{\{patrick.schnider\}@inf.ethz.ch}
\\\ \\
\inst{4} 
Department of Applied Mathematics, \\
Faculty of Mathematics and Physics, Charles University, Czech Republic \\
\texttt{\{valtr\}@kam.mff.cuni.cz}
}
\end{center}

\begin{abstract}

\noindent
Let $\LL$ be an arrangement of $n$ lines in the Euclidean plane.  The
\emph{$k$-level} of $\LL$ consists of all vertices $v$ of the
arrangement which have exactly $k$ lines of $\LL$ passing below $v$.
The complexity (the maximum size) of the $k$-level in a line
arrangement has been widely studied.  In 1998 Dey proved an upper
bound of $O(n\cdot (k+1)^{1/3})$.  Due to the correspondence between
lines in the plane and great-circles on the sphere, the asymptotic
bounds carry over to arrangements of great-circles on the sphere,
where the $k$-level denotes the vertices at distance at most $k$ to a
marked cell, the \emph{south pole}.

We prove an upper bound of $O((k+1)^2)$ on the expected
complexity of the $k$-level in great-circle arrangements 
if the south pole is chosen uniformly at random
among all cells. 

We also consider arrangements of great $(d-1)$-spheres on the sphere
$\SS^d$ which are orthogonal to a set of random points on $\SS^d$. In
this model, we prove that the expected complexity of the $k$-level is
of order $\Theta((k+1)^{d-1})$.
\end{abstract}

\section{Introduction}
\label{sec:intro}

Let $\LL$ be an arrangement of $n$ lines in the Euclidean plane.  The
\emph{vertices} of $\LL$ are the intersection points of lines of $\LL$.
Throughout this article we consider arrangements with the properties that
no line is vertical and 
no three lines intersect in a common vertex.
The \emph{$k$-level} of $\LL$ consists of
all vertices $v$ which have exactly $k$ lines of $\LL$ below $v$.  We denote
the $k$-level by $V_k(\LL)$ and its size by $f_k(\LL)$.  Moreover, by $f_k(n)$
we denote the maximum of $f_k(\LL)$ over all arrangements $\LL$ of $n$ lines,
and by $f(n)=f_{\lfloor (n-2)/2 \rfloor}(n)$ the maximum size of the
\emph{middle level}.

A \emph{$k$-set} of a finite point set $P$ in the Euclidean plane is a subset
$K$ of $k$ elements of $P$ that can be separated from $P\setminus K$ by a
line. Paraboloid duality is a bijection $P \leftrightarrow \LL_P$ between
point sets and line arrangements (for details on this duality
see~\cite[Chapter~6.5]{ORourke1994_book} or
\cite[Chapter~1.4]{Edelsbrunner1987_book}).  The number of $k$-sets of $P$
equals $|V_{k-1}(\LL_P) \cup V_{n-1-k}(\LL_P)|$.

In discrete and computational geometry bounds on the number of $k$-sets of a
planar point set, or equivalently on the size of $k$-levels of a planar line
arrangement have important applications. 
The complexity of $k$-levels was first studied by Lov\'{a}sz~\cite{Lovasz1971}
and Erd\H{o}s et al.~\cite{ErdosLSS1973}. 
They bound the size of the $k$-level
by $O(n \cdot (k+1)^{1/2})$.  Dey \cite{Dey1998} used the crossing lemma to
improve the bound to $O(n \cdot (k+1)^{1/3})$.  In particular, the maximum size
$f(n)$ of the middle level is $O(n^{4/3})$.  Concerning the lower bound on the
complexity, Erd\H{o}s et al.~\cite{ErdosLSS1973} gave a construction showing
that $f(2n) \ge 2f(n) + cn = \Omega(n \log n)$ and conjectured that
$f(n) \ge \Omega(n^{1+\varepsilon})$.  An alternative 
$\Omega(n \log n)$-construction was given by Edelsbrunner and
Welzl~\cite{EdelsbrunnerWelzl1985}.  The current best lower bound
$f_k(n) \ge n \cdot e^{\Omega(\sqrt{\log k})}$ was obtained by
Nivasch~\cite{Nivasch2008} improving on a bound by T\'oth~\cite{Toth2001}. 


\subsection{Generalized Zone Theorem}

In order to define ``zones'', 
let us introduce the notion of ``distances''.
For
$x$ and $x'$ being a vertex, edge, line, or cell of an arrangement $\LL$ of lines in $\RR^2$
we let their \emph{distance} $\dist_{\LL}(x,x')$ be the minimum number of lines of
$\LL$ intersected by the interior of a curve connecting a point of $x$ with a
point of $x'$.  
Pause to note that the $k$-level of $\LL$ is precisely
the set of vertices which are at distance $k$ to the bottom cell.

The $(\leq j)$-zone
$Z_{\leq j}(\ell, \LL)$ of a line $\ell$ in an arrangement $\LL$ is defined as
the set of vertices, edges, and cells from $\LL$ which have distance at most
$j$ from $\ell$. See Figure~\ref{Fig:Zones} for an illustration.

\begin{figure}[htb]
  \centering
  
  \hbox{}\hfill
    \begin{subfigure}[t]{.58\textwidth}
    \centering
    \includegraphics[width=0.95\textwidth]{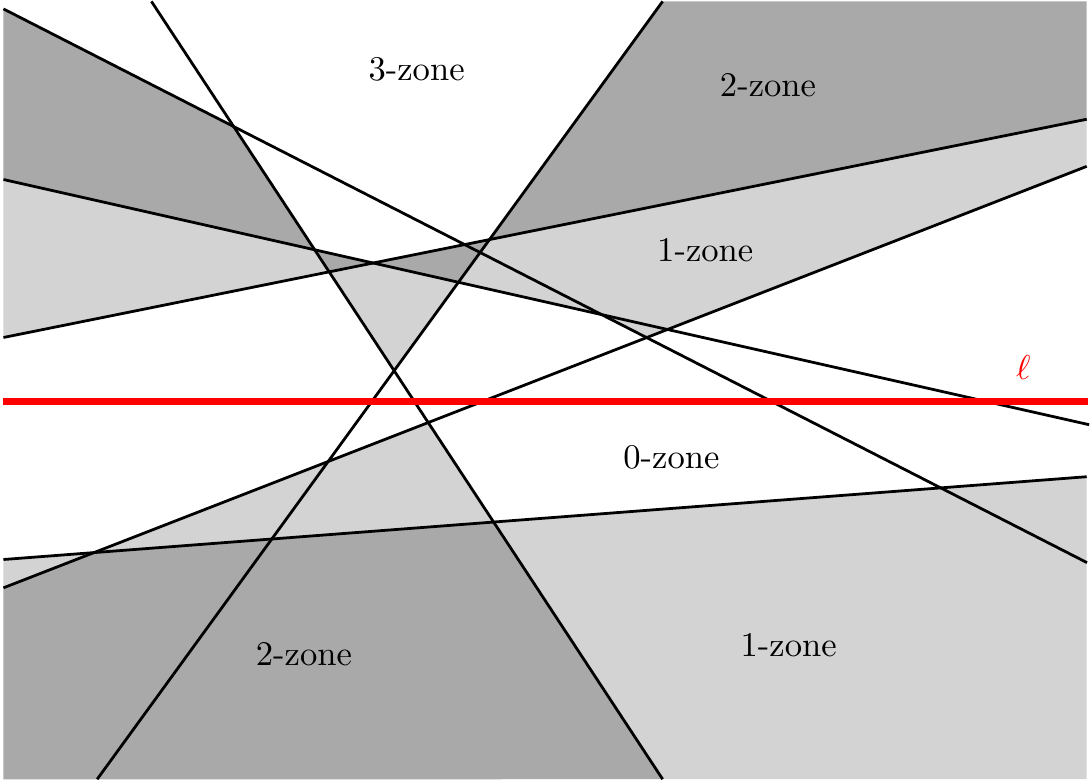}
    \caption{}
    \label{Fig:Zones}
  \end{subfigure}
  \hfill
  \begin{subfigure}[t]{.24\textwidth}
    \centering
    \includegraphics[width=0.95\textwidth]{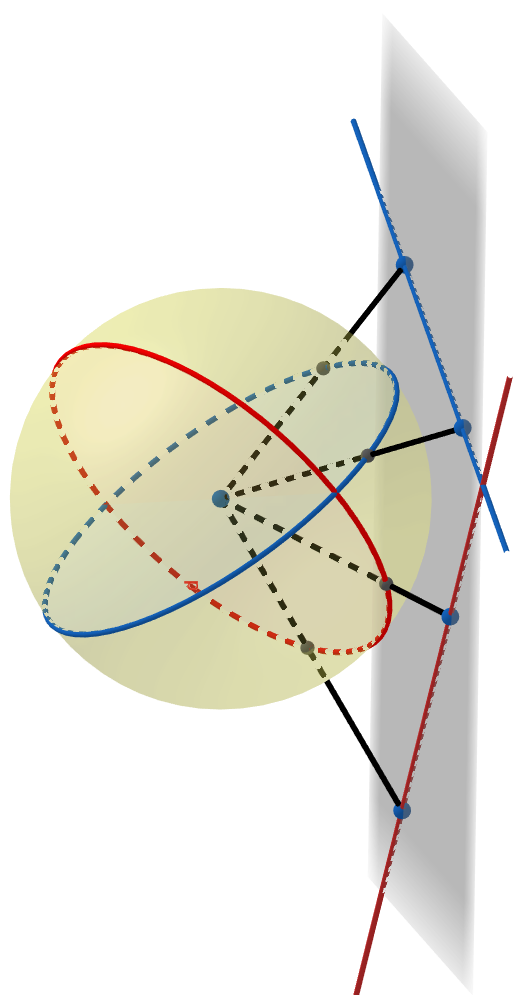}
    \caption{}
    \label{fig:duality}
  \end{subfigure}
  \hfill\hbox{}

  \caption{
  \subref{Fig:Zones} The higher order zones of a line $\ell$.
  \subref{fig:duality} The correspondence between great-circles 
    on the unit sphere and lines in a plane. 
    Using the center
    of the sphere as the center of projection points on the sphere 
    are projected to the points in the plane.
  }
\end{figure}

For arrangements of hyperplanes in $\RR^d$ the $(\leq j)$-zone is
defined alike. The classical zone theorem provides bounds for the zone
($(\leq 0)$-zone) of a hyperplane (cf.\
\cite{EdelsbrunnerSeidelSharir1991} and
\cite[Chapter~6.4]{Matousek2002_book}). A generalization with bounds
for the complexity of the $(\leq j)$-zone
appears as an exercise in Matou\v{s}ek's book
\cite[Exercise~6.4.2]{Matousek2002_book}. In the proof of
Theorem~\ref{Thm:average_k-level} we use a variant of the
2-dimensional case (Lemma~\ref{thm:ckn_bound_higher}). For the sake
of completeness and to provide explicit constants, we include the
proof 
in Section~\ref{sec:proof_ckn_bound}.

\begin{lemma}\label{thm:ckn_bound_higher}
  Let $\LL$ be a simple arrangement of $n$ lines in $\RR^2$ and
  $\ell \in \LL$. The $(\le j)$-zone of $\ell$ contains at most
  $2e \cdot (j+2)n$ vertices strictly above $\ell$.
\end{lemma}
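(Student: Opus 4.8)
The plan is to prove this by the Clarkson--Shor random-sampling technique, with the ordinary (one-sided) zone theorem serving as the base case and a carefully tuned sampling probability producing the constant $e$. After a generic rotation (possible since no line is vertical) I would assume $\ell$ is horizontal. Call a vertex of a subarrangement a \emph{$0$-vertex} if it lies strictly above $\ell$ and is at distance $0$ from $\ell$, i.e.\ it sits on the boundary of the zone of $\ell$. The one quantitative input I need is that any arrangement of $m$ lines containing $\ell$ has at most $2m$ such $0$-vertices; this is the classical zone theorem applied on a single side of $\ell$, and for an explicit constant I would reprove it by the standard sweep/charging argument rather than cite it. Everything else is a sampling computation resting on this linear base bound.

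Next I would form a random subarrangement $R$ by always keeping $\ell$ and including each of the other $n-1$ lines independently with probability $p$, to be fixed later. On one side, the base bound gives $\EE[\#\{0\text{-vertices of }R\cup\{\ell\}\}]\le 2\,\EE[|R|]=2p(n-1)$. On the other side, fix a vertex $v$ strictly above $\ell$ with $\dist_{\LL}(\ell,v)=d\le j$ and fix a shortest curve $\gamma$ from $\ell$ to $v$; by definition $\gamma$ crosses exactly $d$ lines of $\LL$, and none of these is one of the two lines through $v$. If both lines through $v$ are sampled and none of those $d$ lines is sampled, then $\gamma$ crosses no line of $R$, so $v$ is a $0$-vertex of $R\cup\{\ell\}$. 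The important point is that this conclusion is insensitive to which \emph{other} lines are sampled, because $\gamma$ already avoided all of them in $\LL$. Hence $v$ becomes a $0$-vertex of $R$ with probability at least $p^2(1-p)^d\ge p^2(1-p)^j$.

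Summing this lower bound over all vertices strictly above $\ell$ at distance at most $j$ and comparing with the expectation estimate yields $p^2(1-p)^j\,N\le 2p(n-1)$, where $N$ is the quantity to be bounded, so that $N<\tfrac{2(n-1)}{p(1-p)^j}$. Choosing $p=\tfrac{1}{j+2}$ gives $\tfrac{1}{p(1-p)^j}=(j+2)\bigl(1+\tfrac{1}{j+1}\bigr)^j<e(j+2)$, since $\bigl(1+\tfrac{1}{j+1}\bigr)^j<\bigl(1+\tfrac{1}{j+1}\bigr)^{j+1}<e$. Therefore $N<2e(j+2)n$, which is exactly the claimed bound.

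The step I expect to demand the most care is the base case: I must pin down the one-sided zone bound with a small explicit constant (here $2m$), because any slack there propagates directly into the final constant $2e(j+2)$. The second delicate point, easy to overlook, is that distance to the \emph{line} $\ell$ is a minimum over connecting curves rather than a plain vertical count; the argument sidesteps this entirely by using only the one-directional implication ``deleting the $d$ lines crossed by a fixed shortest path turns $v$ into a $0$-vertex''. This is precisely the inequality Clarkson--Shor requires, and it is robust to the presence of the remaining sampled lines, so no converse characterization of the distance is needed.
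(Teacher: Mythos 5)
Your proposal is correct and takes essentially the same route as the paper: the identical Clarkson--Shor sampling with $p=\tfrac{1}{j+2}$ (always keeping $\ell$), the classical one-sided zone theorem as the base case, the lower bound $p^2(1-p)^{d}$ for a distance-$d$ vertex surviving as a $0$-vertex, and the estimate $(1+\tfrac{1}{j+1})^{j+1}<e$, giving $2e(j+2)n$. One bookkeeping caveat: since $\ell$ is always sampled, your inequality $\EE[\#\{0\text{-vertices}\}]\le 2p(n-1)$ requires the base bound in its sharp form (at most $2m-3$ vertices above $\ell$, which is what the standard inductive argument and the paper use) rather than $2m$; with the weaker $2m$ the sample contributes an additive constant that inflates the final bound to $2e(j+2)(n+j+1)$, so keep the $-3$ when you reprove the base case.
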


\subsection{Arrangements of Great Circles}
\label{ssec:arrangements_greatcircles}

Let $\Pi$ be a plane in 3-space which does not contain the origin and
let $\SS^2$ be a sphere in 3-space centered at the origin. The
central projection $\Psi_\Pi$ yields a bijection  between arrangements
of great circles on $\SS^2$ and arrangements of
lines in $\Pi$.
Figure~\ref{fig:duality} gives an illustration.

The correspondence $\Psi_\Pi$
preserves intersesting properties, e.g. simplicity of the
arrangements. If $\Psi_\Pi(\CC) = \LL$, and $\LL$ has no parallel lines, then
$\Psi_\Pi$ induces a bijection between pairs of antipodal vertices
of~$\CC$ and vertices of~$\LL$. 

As in the planar case, 
we define the \emph{distance} between points $x,y$ of $\SS^2$ relative to a
great-circle arrangement $\CC$ as the minimum number of circles of
$\CC$ intersected by the interior of a curve connecting $x$ with $y$.
The \emph{$k$-level} (\emph{$\le k$-zone} resp.) of~$\CC$ is 
the set of all the vertices of~$\CC$ 
at distance $k$ (distance at most $k$ resp.) from the south pole.

Let $\Pi_1$ and $\Pi_2$ be two parallel planes in 3 space with the
origin between them and let $\Psi_1$ and $\Psi_2$ be the respective
central projections. For a great-circle arrangement $\CC$ 
we consider $\LL_1=\Psi_1(\CC)$ and $\LL_2=\Psi_2(\CC)$.
A vertex $v$ from the $k$-level of $\CC$
maps to a vertex of the $k$-level in one of $\LL_1$, $\LL_2$ and to a
vertex of the $(n-k-2)$-level in the other. Hence,
bounds for the maximum size of the $k$-level of line arrangements 
carry over to the $k$-level of great-circle arrangements 
except for a multiplicative factor of~2.

The $(\le j)$-zone of a great-circle $C$ in $\CC$ projects
to a $(\le j)$-zone of a line in each of $\LL_1$ and $\LL_2$.
Hence, the complexity of a $(\le j)$-zone in $\CC$ is upper
bounded by two times the maximum complexity of a  $(\le j)$-zone in a
line arrangement. Lemma~\ref{thm:ckn_bound_higher}
implies that the $(\le j)$-zone of a great-circle $C$ 
in an arrangement of $n$ great-circles 
contains at most $4e \cdot (j+2) n$ vertices.

\subsection{Higher Dimensions}

The problem of determining the complexity of the $k$-level admits a natural
extension to higher dimensions. We consider arrangements in $\RR^d$
of hyperplanes with the properties that 
no hyperplane is parallel to the $x_d$-axis and
no $d+1$ hyperplanes intersect in a common point. 
The \emph{$k$-level} $V_k(\AA)$ of $\AA$ consists of all
vertices (i.e. intersection points of $d$ hyperplanes) which have
exactly $k$ hyperplanes of $\AA$ below them (with respect to the $d$-th
coordinate).  We denote the $k$-level by $V_k(\AA)$ and its size by
$f_k(\AA)$.  Moreover, by $f_k^{(d)}(n)$ we denote the maximum of $f_k(\AA)$
among all arrangements $\AA$ of $n$ hyperplanes in $\RR^d$.

As in the planar case, there remains a gap between lower and upper bounds;
\[
\Omega(n^{\lfloor d/2 \rfloor}k^{\lceil d/2 \rceil -1})
\leq 
f_k^{(d)}(n)
\leq 
O(n^{\lfloor d/2 \rfloor}k^{\lceil d/2 \rceil - c_d}),
\] 
here $c_d>0$ is a small positive constant only depending on~$d$.  Details and
references can be found in Chapter~11 of Matou\v{s}ek's
book~\cite{Matousek2002_book}.
In dimensions $3$ and~$4$ improved bounds have
been established. For example, for $d=3$, it is known that
$ f_k^{(3)}(n) \le O(n(k+1)^{3/2}) $ (see \cite{SharirSmorodinskyTardos2001}).
For the middle level in dimension $d \ge 2$ an improved lower bound
$ f^{(d)}(n) \ge n^{d-1}\cdot e^{\Omega(\sqrt{\log n})} $ is known (see
\cite{Toth2001} and \cite{Nivasch2008}).

We call the intersection of $\SS^d$ with a central hyperplane
in $\RR^{d+1}$ a \emph{great-$(d-1)$-sphere} of $\SS^d$. 
Similar to the planar case, 
arrangements of hyperplanes in $\RR^d$
are in correspondence with 
arrangements of great-$(d-1)$-spheres 
on the unit sphere $\SS^d$ (embedded in $\RR^{d+1}$).
The terms ``distance'' and ``$k$-level'' 
generalize in a natural way.

\section{Our Results}

In the first part of this paper 
we consider arrangements of great-circles on the sphere 
and investigate the average complexity of the $k$-level 
when the southpole is chosen uniformly at random among the cells.
This question was raised by Barba, Pilz, and Schnider while sharing a
pizza~\cite[Question~4.2]{BarbaPilzSchnider2019}.

In Section~\ref{sec:proof_k_zone} we
prove the following bound on the average complexity.

\begin{theorem}
\label{Thm:average_k-level}
Let $\CC$ be a simple arrangement of $n$ great-circles.
For $k < n/3$ the expected size of the $k$-level is 
at most $4e \cdot (k+2)^2$
when the southpole is chosen uniformly at random among the cells of $\CC$.
\end{theorem}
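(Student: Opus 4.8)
The plan is to fix the arrangement $\CC$ and exploit a special feature of great-circle arrangements: the distance $\dist_{\CC}(x,y)$ between two points equals the number of great-circles that separate them (the tope graph of a realizable oriented matroid embeds isometrically into the hypercube, so the minimum number of crossings is realized by crossing each separating circle exactly once). Hence a vertex $v$ lies in the $k$-level with respect to a south-pole cell $c$ precisely when exactly $k$ circles of $\CC$ separate $v$ from $c$. Writing $f_k(\CC,c)$ for the number of such vertices and $N$ for the number of cells, the quantity to bound is
\[
\EE[f_k] \;=\; \frac{1}{N}\sum_{c}f_k(\CC,c)
\;=\;\frac{1}{N}\,\#\{(v,c):\dist_{\CC}(v,c)=k\},
\]
where $N=n(n-1)+2$ by Euler's formula, so $N=\Theta(n^2)$. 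The target $4e\cdot(k+2)^2$, compared with the zone bound $4e\cdot(k+2)n$ of Lemma~\ref{thm:ckn_bound_higher} (in its great-circle form) and with $N=\Theta(n^2)$, already reveals the mechanism: averaging over the random cell should replace the factor $n$ of the zone bound by a factor $(k+2)$.

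First I would split the incidence count over the two circles through each vertex,
\[
\#\{(v,c):\dist_{\CC}(v,c)=k\}\;=\;\tfrac12\sum_{C\in\CC}A_C,\qquad
A_C:=\#\{(v,c):v\in C,\ \dist_{\CC}(v,c)=k\},
\]
which is valid since in a simple arrangement every vertex lies on exactly two circles. For a fixed circle $C$, monotonicity of the distance gives $\dist_{\CC}(c,C)\le \dist_{\CC}(c,v)=k$ for every contributing pair, so all relevant cells $c$ lie in the $(\le k)$-zone of $C$. Lemma~\ref{thm:ckn_bound_higher} bounds the number of vertices, and hence a comparable number of cells, of this zone by $O((k+2)n)$. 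It then remains to count, for each such cell $c$ at distance $t\le k$ from $C$, the vertices $v$ on $C$ with $\dist_{\CC}(v,c)=k$. Traversing $C$ and recording on which side of each of the other $n-2$ circles we lie, the distance to $c$ changes by $\pm 1$ at each vertex of $C$, and the vertices at distance exactly $k$ cluster in a window of size $O(k-t+1)$ around the foot of $c$ on $C$; this should give each zone-cell a contribution of $O(k+2)$.

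Combining a per-cell contribution of $O(k+2)$ with the $O((k+2)n)$ cells of the zone yields $A_C=O((k+2)^2 n)$ for every circle, whence $\sum_c f_k(\CC,c)=\tfrac12\sum_C A_C=O((k+2)^2 n^2)$; dividing by $N=\Theta(n^2)$ gives the claimed $O((k+2)^2)$, and carrying the constant $4e$ through Lemma~\ref{thm:ckn_bound_higher} should produce exactly $4e\cdot(k+2)^2$. The hypothesis $k<n/3$ keeps the zone parameter in the regime where the bound is meaningful and prevents a vertex and its antipode (whose levels sum to $n-2$) from both being counted in low levels.

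The step I expect to be the main obstacle is the localized count in the second paragraph: upgrading the global zone bound for the entire circle $C$ to the statement that each cell of its $(\le k)$-zone sees only $O(k+2)$ distance-$k$ vertices of $C$. The difficulty is that the distance function along $C$ need not be monotone and may oscillate, so the informal ``window of size $O(k-t+1)$'' must be replaced by a genuine charging argument — for instance charging each distance-$k$ vertex on $C$ to an edge on the boundary of the zone, or applying Lemma~\ref{thm:ckn_bound_higher} a second time to the sub-arrangement near $c$. Making this charging precise while keeping the constant at $4e$ is where the real work lies; the two reductions above and the final averaging are otherwise routine bookkeeping.
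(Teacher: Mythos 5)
Your reduction of the expectation to the incidence count $\#\{(v,c):\dist_\CC(v,c)=k\}$, the exact decomposition $\tfrac12\sum_C A_C$ over the circle containing the vertex, and the observation that every contributing cell lies in the $(\le k)$-zone of $C$ are all sound; in fact your decomposition is the dual of the paper's (the paper fixes the circle bounding the \emph{cell} and counts vertices in its zone, you fix the circle through the \emph{vertex} and count cells in its zone). But the step you yourself flag as the main obstacle is a genuine gap, and neither of the fixes you sketch would close it. The window/monotonicity picture fails for exactly the reason you state -- the distance function along $C$ oscillates -- and neither charging to the zone boundary nor a second application of Lemma~\ref{thm:ckn_bound_higher} helps: that lemma is a Clarkson--Shor counting argument which says nothing about how the distance-$k$ vertices are distributed along a single circle. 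The ingredient that is actually needed, and that the paper supplies, is a structural fact specific to \emph{great} circles: for a fixed cell $c$, the set of points of $C$ separated from $c$ by another circle $D$ is a \emph{half}-circle $H_D$ (antipodality is essential), and a family of half-circles on $\SS^1$ has at most $k+1$ cells at level $k$ provided the family is large relative to $k$ (the paper's Lemma~\ref{Lem:cyclic_intervals}, proved via Carath\'eodory: if the level-$k$ set surrounded the center, every half-circle would have to meet it, forcing $n\le 3k$). Your distance-$k$ vertices on $C$ are precisely the transition points between level-$k$ and level-$(k+1)$ cells of the arrangement $\{H_D : D\ne C\}$ on $C$, so this lemma bounds them by $2(k+1)$ per cell; without it, or some equivalent use of antipodality, the per-cell $O(k+2)$ count has no proof, and indeed it is false for arrangements of arbitrary arcs in place of half-circles.

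Two further points. First, this is where the hypothesis $k<n/3$ genuinely enters -- it is the condition in the Carath\'eodory step of Lemma~\ref{Lem:cyclic_intervals} -- and not, as you suggest, a matter of keeping the zone bound meaningful or of a vertex and its antipode both appearing at low levels. Second, even once the missing lemma is supplied, your orientation of the double count loses constant factors in two places: converting the zone's vertex bound into a bound on the number of zone \emph{cells}, and the factor $2(k+1)$ vertices per cell versus the paper's at most $k$ cells per vertex (the paper puts each distance-$k$ cell touching $C$ in bijection with a $(k-1)$-clique of half-circles). So your route yields $O((k+2)^2)$ but not the stated constant $4e\cdot(k+2)^2$; the paper's cell-side decomposition is what makes the bookkeeping come out exactly.
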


The condition $k < n/3$ is needed for Lemma~\ref{Lem:cyclic_intervals}
as for larger $k$ we would have to double the multiplicative constant. 
However, for $k$
in $\Omega(n^{3/5})$ the stated bound is implied by the $O(nk^{1/3})$
bound on the maximum size of a $k$-level. Still it is remarkable that the
bound is independent of the number $n$ of great-circles in the
arrangement.
\medskip

In the second part,
we investigate arrangements of randomly chosen great-circles.  
Here we propose the following model of randomness.  
On $\SS^2$ we have the duality between points 
and great-circles
(each antipodal pair of points
defines the normal vector of the plane containing a great-circle).
Since we can choose points uniformly at random from $\SS^2$,
we get random arrangements of great-circles.  
The duality generalizes to higher dimensions
so that we can talk about random arrangements on
$\SS^d$ for a fixed dimension $d \ge 2$. Using the
duality between antipodal pairs of points on $\SS^d$ and 
great-$(d-1)$-spheres, we prove the following bound on the
expected size of the $k$-level in this random model 
(the proof can be found in Section~\ref{sec:proof_great_circles}).
Again the bound does not depend on the size of the arrangement.

\begin{theorem}
\label{thm:random_circles}
Let $d \ge 2$ be fixed. In an arrangement of $n$ great-$(d-1)$-spheres chosen uniformly at
  random on the unit sphere $\SS^d$ (embedded in $\RR^{d+1}$),  
  the expected size of the $k$-level
  is of order $\Theta((k+1)^{d-1})$ for all $k \le n/2$. 
\end{theorem}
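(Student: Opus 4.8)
The plan is to use the point–sphere duality directly and to compute the expected size of the $k$-level by linearity of expectation over the $\binom{n}{d}$ potential vertices. Writing the $n$ random great-$(d-1)$-spheres as equators $C_i = \{x \in \SS^d : \langle x, p_i\rangle = 0\}$ of $n$ independent uniform points $p_1,\dots,p_n \in \SS^d$, the vertices of the arrangement are the antipodal pairs $\pm v_I$, where $v_I$ is a unit vector orthogonal to $\{p_i : i \in I\}$ for a $d$-subset $I$. Fixing the south pole $s$, I would first record that the distance from $s$ to a vertex $v$ equals the number of spheres $C_i$ separating $s$ from $v$, i.e.\ the number of indices $i$ with $\langle s, p_i\rangle\langle v, p_i\rangle < 0$; this holds because the short geodesic arc from $s$ to $v$ has length $\theta<\pi$ and therefore meets each great-sphere at most once, crossing exactly the separating ones. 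Since the $n-d$ spheres not through $v_I$ separate $v_I$ from $-v_I$, exactly one of $\pm v_I$ lies at distance $k$ whenever $k\neq (n-d)/2$, and by symmetry $\EE[f_k] = 2\binom{n}{d}\cdot \Pr[\dist(s,V)=k]$, where $V$ is the vertex direction of $d$ fixed spheres and the other $n-d$ spheres are independent.

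Next I would compute this probability by conditioning on the angle $\theta=\arccos\langle s,V\rangle$. For a single uniform point $p$, a short computation in the plane $\mathrm{span}(s,V)$ shows $\Pr[\langle s,p\rangle\langle V,p\rangle<0]=\theta/\pi$, and by rotational symmetry this value is unchanged on $\SS^d$. As the $n-d$ spheres not through $V$ are independent of $V$, the distance is conditionally $\mathrm{Bin}(n-d,\theta/\pi)$. Moreover, because the joint law of $p_1,\dots,p_d$ is rotation invariant, $V$ is uniform on $\SS^d$, so $\theta$ has density proportional to $\sin^{d-1}\theta$ on $[0,\pi]$. Substituting $q=\theta/\pi$ yields
\begin{equation*}
\Pr[\dist(s,V)=k] = \frac{\pi}{Z}\binom{n-d}{k}\int_0^1 q^k(1-q)^{n-d-k}\sin^{d-1}(\pi q)\,dq,\qquad Z=\int_0^\pi \sin^{d-1}t\,dt.
\end{equation*}

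The crux is then to evaluate this integral. I would exploit the elementary sandwich $2q\le\sin(\pi q)\le\pi q$ valid for $q\in[0,\tfrac12]$, so that on the relevant range the weight $\sin^{d-1}(\pi q)$ is comparable to $q^{d-1}$. Replacing $\sin^{d-1}(\pi q)$ by $q^{d-1}$ converts the integral into a Beta integral, and
\begin{equation*}
\binom{n-d}{k}\int_0^1 q^{k+d-1}(1-q)^{n-d-k}\,dq = \binom{n-d}{k}B(k+d,\,n-d-k+1) = \frac{(n-d)!\,(k+d-1)!}{k!\,n!}.
\end{equation*}
Since $(n-d)!/n! = \Theta(n^{-d})$ and $(k+d-1)!/k! = \Theta((k+1)^{d-1})$, this product is $\Theta((k+1)^{d-1}/n^d)$, and multiplying by $2\binom{n}{d}=\Theta(n^d)$ gives $\EE[f_k]=\Theta((k+1)^{d-1})$.

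The main work, and the step I expect to be most delicate, is justifying the replacement of $\sin^{d-1}(\pi q)$ by $q^{d-1}$ uniformly over the whole range $k\le n/2$. The Beta weight $q^{k}(1-q)^{n-d-k}$ concentrates near $q^\ast = k/(n-d)$ with width $O(n^{-1/2})$; for $k\le n/2$ one has $q^\ast\le\tfrac12+o(1)$, so the concentration region lies essentially inside $[0,\tfrac12]$ where the sandwich holds, and the negligible tail outside is absorbed into the constants. For the upper bound the global inequality $\sin^{d-1}(\pi q)\le(\pi q)^{d-1}$ already suffices; for the lower bound I would restrict the integral to a sub-interval around $q^\ast$, apply $\sin(\pi q)\ge 2q$ there, and check that a constant fraction of the Beta mass survives the restriction. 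Values $k>n/2$ follow from the antipodal symmetry relating $f_k$ and $f_{n-d-k}$, so the two-sided bound extends to the full stated range.
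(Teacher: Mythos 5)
Your proposal follows essentially the same route as the paper's proof: reduce the expected level size, via linearity of expectation and the fact that a uniformly chosen vertex is a uniform point of $\SS^d$, to the probability that a random point is separated from the south pole by exactly $k$ of the remaining spheres; condition on the angle $\theta$ to get a $\mathrm{Bin}(n-d,\theta/\pi)$ law weighted by the $\sin^{d-1}\theta$ density; and compare the resulting integral with Beta integrals. The paper does precisely this (it runs the argument with $n+d$ spheres so that its separation lemma is applied to $n$ of them, a cosmetic difference), and your upper bound via $\sin(\pi q)\le \pi q$ is the paper's second upper bound verbatim.

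The one place you genuinely diverge is the lower bound, and there your sketch has a soft spot. You propose to keep a neighbourhood of $q^\ast=k/(n-d)$ and use $\sin(\pi q)\ge 2q$ there, asserting that for $k\le n/2$ the concentration region lies ``essentially inside $[0,\frac12]$''. That is not quite right at the top of the range: for $k=n/2$ one has $q^\ast=\frac12\cdot\frac{n}{n-d}>\frac12$, and since the Beta weight has width $\Theta(n^{-1/2})$ around $q^\ast$, roughly half of its mass sits above $\frac12$, where $\sin(\pi q)\ge 2q$ fails; the excess is not a negligible tail. What saves the argument is weaker: a constant fraction of the Beta mass still lies where $\sin(\pi q)=\Theta(q)$ (for instance on $[0,\frac34]$ one has $\sin(\pi q)\ge \frac{2\sqrt{2}}{3}\,q$), and this constant fraction must be verified uniformly in $k\le n/2$, e.g.\ by a Chebyshev or mean--median estimate for the $\mathrm{Beta}(k+d,\,n-d-k+1)$ distribution. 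The paper sidesteps all of this with a cleaner trick: split the integral at $\phi=\pi/2$, use $\sin\phi\ge 2\phi/\pi$ on $[0,\pi/2]$ and the mirrored bound $\sin\phi\ge 2(1-\phi/\pi)$ on $[\pi/2,\pi]$, and note that each piece dominates the corresponding piece of $\binom{n}{k}\int_0^1 t^{k+d-1}(1-t)^{n-k+d-1}\,dt$, which the Beta identity evaluates exactly. That yields explicit constants and an inequality valid for every $n$ and $k$, with $k\le n/2$ used only at the end to read off the order $\Theta((k+1)^{d-1}/n^d)$. So your argument is sound in outline and can be completed, but to close it you should either carry out the uniform concentration estimate or replace it by the splitting trick.
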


\section{Proof of Lemma~\ref{thm:ckn_bound_higher}}
\label{sec:proof_ckn_bound}

As hinted in Matou\v{s}ek's book \cite[Exercise~6.4.2]{Matousek2002_book}, 
we use the method of Clarkson and Shor \cite{ClarksonShor1989} 
to prove Lemma~\ref{thm:ckn_bound_higher}.

Let $\LL$ be an arrangement of $n$ lines in $\RR^2$ and let
$\ell \in \LL$ be a fixed line. For any $j=0,1,\ldots,n-1$ denote
by $\Zj$ the set of vertices of $\LL$ contained in the $(\le j)$-zone
$Z_{\le j}(\ell,\LL)$ of $\ell$ and lying strictly above $\ell$. 
In other words, $v\in\Zj$ if there is a simple path $P_v$ 
in the halfplane $\ell^+$ from
$v$ to $\ell$ whose interior has at most $j$ intersections with lines
from $\LL$.

Let $R$ be a random sample of lines from $\LL$ 
where $\ell \in R$ and each line $\ell' \neq \ell$
independently belongs to $R$ with probability $p:=\frac{1}{j+2}$.  The probability that a
vertex $v \in \Zj$ is present in the induced subarrangement $\LL(R)$ and appears at
distance~$0$ from $\ell$ is at least
$(\frac{1}{j+2})^2 \cdot (1-\frac{1}{j+2})^{r}$, where $0 \le r \le j$
denotes the distance of $v$ from $\ell$ in $\LL$.  
Figure~\ref{fig:ckn_bound} gives an illustration.
Note that
\[
  \left(1-\frac{1}{j+2}\right)^{r} \ge \left(1-\frac{1}{j+2}\right)^{j+1}
  = \left(\frac{j+1}{j+2}\right)^{j+1} = \left(1+\frac{1}{j+1}\right)^{-(j+1)} \ge 1/e.
\]

\begin{figure}[htb]
  \centering
    \includegraphics[page=1]{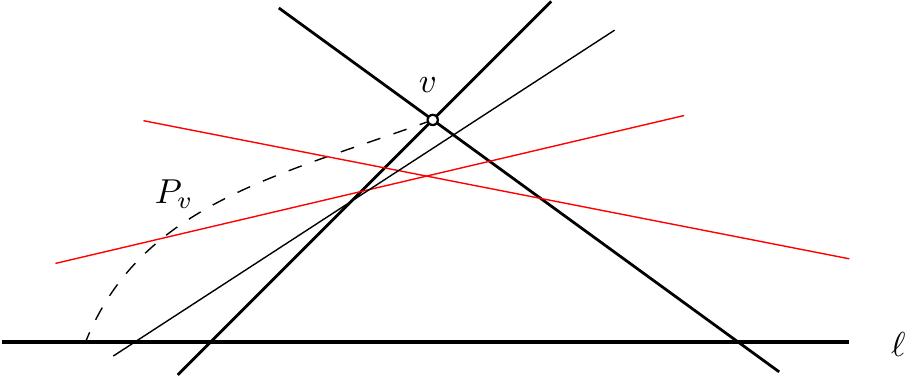}
  \caption{A path $P_v$ witnessing that $v$ belongs to the $(\leq j)$-zone of $\ell$ for all $j \ge 2$.}
  \label{fig:ckn_bound}
\end{figure}

Let $X$ be the number of vertices in the $0$-zone of $\ell$ in $\LL(R)$ that lie strictly above $\ell$.
For the
expectation of this random variable we have
\[
\EE(X) \ge \frac{1}{e} \left(\frac{1}{j+2}\right)^2 \cdot |\Zj|.
\]

An inductive argument, as used to show the classical zone theorem 
(see \cite[page~136]{ETH-Skript}),
shows there are at most $2n-3$ vertices
lying strictly above $\ell$ in the zone.
Hence, we have $X \leq 2 \cdot|R|$ and
\[
\EE(X) \leq 2 \cdot \EE(|R|) = 2 n p.
\]
The above inequalities imply
\[
|\Zj| \leq  e\cdot (j+2)^2 \cdot 2\cdot n \cdot p = 2 \cdot e\cdot (j+2) \cdot n.
\]
This concludes the proof of the theorem.

\section{Proof of Theorem~\ref{Thm:average_k-level}}
\label{sec:proof_k_zone}


For the proof of Theorem~\ref{Thm:average_k-level},
we fix a great-circle~$C$ from $\CC$ and 
denote the closures of the two hemispheres of $C$ on $\SS^2$ as $C^+$ and~$C^-$.
As an intermediate step, we
bound the size of the set $\FF_k(C^+)$ of pairs $(F,v)$, where $F$ is a cell
of $C^-$ touching $C$ and~$v$ is a vertex of $C^+$
whose distance to $F$ is $k$. 
We show $|\FF_k(C^+)| \le 2e \cdot (k+1)^2n$.
In the case $k=0$, 
vertex $v$ must be one of the $2n$ vertices on $C$
and $F$ is one of the two cells of $C^-$ which is adjacent to~$v$.
Hence, we obtain $|\FF_0(C^+)| \le 4n$.
It remains to deal with the general case $k \ge 1$.
Note that if $(v,F)\in \FF_k(C^+)$ then $v$ belongs to the $(\le k-1)$-zone of~$C$.

Consider a family~$\II$ of half-intervals in $\RR$, it consists of 
\emph{left-intervals} of the form $(-\infty,a]$ and
\emph{right-intervals} $[b,\infty)$.  A subset $J$ of $k$
half-intervals from~$\II$ is a $k$-\emph{clique} if there is a point
$p\in\RR$ that lies in all the half-intervals of $J$ but not in any
half-interval of~$\II\setminus J$.

\begin{lemma}
\label{Lem:intervals}
Any family $\HH$ of half-intervals in $\RR$ contains at most $k+1$ different $k$-cliques.
\end{lemma}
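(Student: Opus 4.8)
The plan is to track, for each point $p \in \RR$, the set of half-intervals of $\HH$ that contain $p$, and to show that this set is governed by two monotone quantities. For a point $p$ write $J(p)$ for the collection of half-intervals containing $p$, and split it as $J(p) = J_L(p) \cup J_R(p)$, where $J_L(p)$ consists of the left-intervals $(-\infty,a]$ with $p \le a$ and $J_R(p)$ of the right-intervals $[b,\infty)$ with $b \le p$. By the definition of a clique, a $k$-clique is exactly a set of the form $J(p)$ with $|J(p)| = k$, so it suffices to bound the number of distinct sets $J(p)$ of size $k$.

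First I would record the key monotonicity: if $p < p'$ then $J_L(p') \subseteq J_L(p)$ and $J_R(p) \subseteq J_R(p')$. Both inclusions are immediate from the definitions, since $p' \le a$ forces $p \le a$, and $b \le p$ forces $b \le p'$. Thus, as $p$ increases, the left part can only lose elements while the right part can only gain them; the two parts are nested along the line, and no general-position assumption on the endpoints is needed.

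The crux is then an injectivity argument. To each $k$-clique $J$ I assign the integer $s(J) = |J_L|$, the number of left-intervals it contains; since $|J| = k$ we have $s(J) \in \{0, 1, \dots, k\}$, so there are only $k+1$ possible values. I claim the map $J \mapsto s(J)$ is injective on $k$-cliques. Indeed, suppose $J = J(p)$ and $J' = J(p')$ are $k$-cliques with $s(J) = s(J')$, and assume without loss of generality $p \le p'$. Monotonicity gives $J_L(p') \subseteq J_L(p)$; as these sets have equal size they coincide, and likewise $J_R(p) \subseteq J_R(p')$ of equal size forces $J_R(p) = J_R(p')$. Hence $J = J'$.

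Combining the two observations, the number of $k$-cliques is at most the number of attainable values of $s$, which is at most $k+1$, as desired. The only step demanding care is the injectivity claim, where one must invoke both nestings simultaneously; the argument is robust to coincident endpoints precisely because it compares the clique sets by containment rather than reasoning about the positions of individual endpoints.
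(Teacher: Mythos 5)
Your proof is correct and follows essentially the same route as the paper's: both arguments key on the number of left-intervals versus right-intervals containing a witness point, use the monotonicity of these two quantities along $\RR$, and conclude that this pair (equivalently, your $s(J)$, since the total is $k$) determines the clique, giving at most $k+1$ possibilities. Your set-containment-plus-cardinality phrasing of the injectivity step is just a slightly more explicit rendering of the paper's appeal to monotonicity of the counting functions $l$ and $r$.
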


\begin{proof}
  For $p \in \RR$, let $l(p)$ be the number of left-intervals and
  $r(p)$ the number of right-intervals containing $p$. A point $p$ certifies a
  $k$-clique if and only if $l(p)+r(p)=k$. From the monotonicity of the functions
  $l$ and $r$ it follows that if $(l(p_1),r(p_1))=(l(p_2),r(p_2))$ for two
  points $p_1$ and $p_2$, then they are contained in the same intervals. Thus
  the number of $k$-cliques is at most the number of pairs $(l,r)$ such that
  $l+r=k$ and $l, r \ge 0$, which is $k+1$.
\end{proof}

The next lemma is a corresponding result for half-circles on the circle~$\SS^1$.

\begin{lemma}
\label{Lem:cyclic_intervals}
Any family $\HH$ of $n$ half-circles in $\SS^1$ with $n > 3k$ 
contains at most $k+1$ different $k$-cliques.
\end{lemma}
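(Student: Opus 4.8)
The plan is to treat each half-circle as a closed semicircular arc (an arc of length $\pi$ whose two endpoints are antipodal) and to study the \emph{coverage function} $c(p)$, counting how many members of $\HH$ contain a point $p\in\SS^1$. A $k$-clique of $\HH$ is determined by its set $J$ of covering half-circles, and any $k$-clique is certified by some point $p$ with $c(p)=k$ whose covering set is $J$; I will call the points with $c(p)=k$ the \emph{level-$k$ points}. Since distinct cliques have distinct covering sets, it suffices to bound the number of covering sets arising from level-$k$ points, and the first goal is to confine all level-$k$ points to a single closed semicircle.

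The key observation, and the place where the hypothesis $n>3k$ enters, is the following. Suppose three points $p_1,p_2,p_3\in\SS^1$ are positioned so that no open semicircle contains all three (equivalently, each of the three arcs they cut has length at most $\pi$). Then every half-circle $H\in\HH$ must contain at least one of them, for a half-circle avoiding all three would place all three in its complementary open semicircle, which is impossible. Summing over $\HH$ gives $c(p_1)+c(p_2)+c(p_3)\ge n$. Hence if all three were level-$k$ points we would obtain $3k\ge n$, contradicting $n>3k$. Therefore no three level-$k$ points can surround the circle, and a short case analysis on the largest gap between cyclically consecutive level-$k$ points shows that all of them lie in a common closed semicircle. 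After a generic rotation (avoiding finitely many directions) I may take a semicircle $\Sigma$ of length exactly $\pi$ that contains every level-$k$ point in its interior, whose boundary is disjoint from all endpoints of the half-circles, and with $\Sigma$ equal to no member of $\HH$.

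The second step linearizes the problem on $\Sigma$. Since $\Sigma$ and each $H\in\HH$ are arcs of length $\pi$ with antipodal endpoints, a generic $H$ meets $\Sigma$ either in the empty set or in a sub-arc containing \emph{exactly one} endpoint of $\Sigma$: containing both endpoints would force $H=\Sigma$, while a sub-arc interior to $\Sigma$ is ruled out by the equality of lengths. Cutting $\Sigma$ open into an interval $I$, these intersections become precisely left- and right-half-intervals, and for a point in the interior of $\Sigma$ the coverage by $\HH$ equals the coverage by this half-interval family. Consequently every $k$-clique of $\HH$ is certified inside $\Sigma$ and corresponds, with the same covering set, to a $k$-clique of the half-interval family on $I$; this correspondence is injective. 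Lemma~\ref{Lem:intervals} bounds the number of half-interval $k$-cliques by $k+1$, which yields the claimed bound.

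I expect the main obstacle to be the first step: establishing that the level-$k$ points are confined to a single closed semicircle and then arranging the genericity of $\Sigma$ so that the reduction to half-intervals is exact. The coverage-sum inequality makes the confinement clean, but one must check that the boundary of $\Sigma$ and the degenerate cases $H=\Sigma$ or $H=\Sigma^c$ can be avoided without losing any level-$k$ point, which is where the finiteness of $\HH$ and a small perturbation are used.
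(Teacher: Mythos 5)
Your proposal is correct and takes essentially the same route as the paper's proof: both use the hypothesis $n>3k$ to rule out three level-$k$ points that are not contained in a common open semicircle (the paper phrases this via Carath\'eodory's theorem applied to the origin lying in the convex hull of the level-$k$ set), thereby confining all level-$k$ points to one semicircle, and both then reduce to Lemma~\ref{Lem:intervals} -- the paper by central projection onto a line, you by cutting the semicircle open into an interval. The remaining differences are cosmetic, and the boundary/genericity subtleties you flag at the end are likewise glossed over in the paper's own argument.
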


\begin{proof}
For this proof, we embed $\SS^1$ as the unit-circle in~$\RR^2$, 
which is centered at the origin~${\bf o}$.
We consider the set $X$ of all points from $\SS^1$,
which are contained in precisely $k$ of the half-circles of~$\II$,
and distinguish the following two cases.

Case 1: The origin~${\bf o}$ is not contained in the convex hull
of~$X$.  There is a line separating ${\bf o}$ from $X$ and 
rotational symmetry allows us to assume  that $X$ is contained in
$\Pi^+ = \{(x,y) \in \RR^2 \colon y>0\}$.  For each half-circle
$C \in \HH$, the central projection of $C \cap \Pi^+$ to the line
$y=1$ is a half-interval. Since $k$-cliques of $\HH$ and $k$-cliques of the
half-intervals are in bijection we get from Lemma~\ref{Lem:intervals}
that $\HH$ has at most $k+1$ different $k$-cliques.

Case 2: The origin~${\bf o}$ is contained in the convex hull of~$X$.
By Carath\'{e}odory's theorem,
we can find three points $p_1,p_2,p_3$
such that ${\bf o}$ lies in the convex hull of $p_1,p_2,p_3$.
Since each of the $n$ half-circles from $\HH$ contains at least one of these three points,
and each of these three points lies on precisely $k$ half-circles,
we have $n \le 3k$ -- a contradiction to  $n > 3k$.
\end{proof}

For a fixed vertex $v$ in the $(\leq k-1)$-zone of $C$ with
$v \in C^+$, let $\BB_{C^+}(v)$ be the set of cells~$F$ such that
$(F,v) \in\FF_k(C^+)$, in particular $\dist(F,v)=k$.

\medskip

\noindent{\bf Claim.} For $k \ge 1$, we have
 $|\BB_{C^+}(v)|\leq k$.

\begin{proof}
  Consider a great-circle $D \neq C$ from $\CC$.
  For a point $x \in C$, 
  we say that $(v,x)$ is \emph{$D$-separated} 
  if every path from $v$ to $x$ in $C^+$ intersects~$D$.
  The set of all \emph{$D$-separated} points forms 
  a half-circle $H_D$ on $C$.
  Let $\HH$ be the set of these half-circles,
  i.e., $\HH = \{ H_D : D \in \CC, D \neq C \}$.
  See Figure~\ref{Fig:Intervals}.
  
\begin{figure}[htb]
\centering
\includegraphics[scale=0.8,page=2]{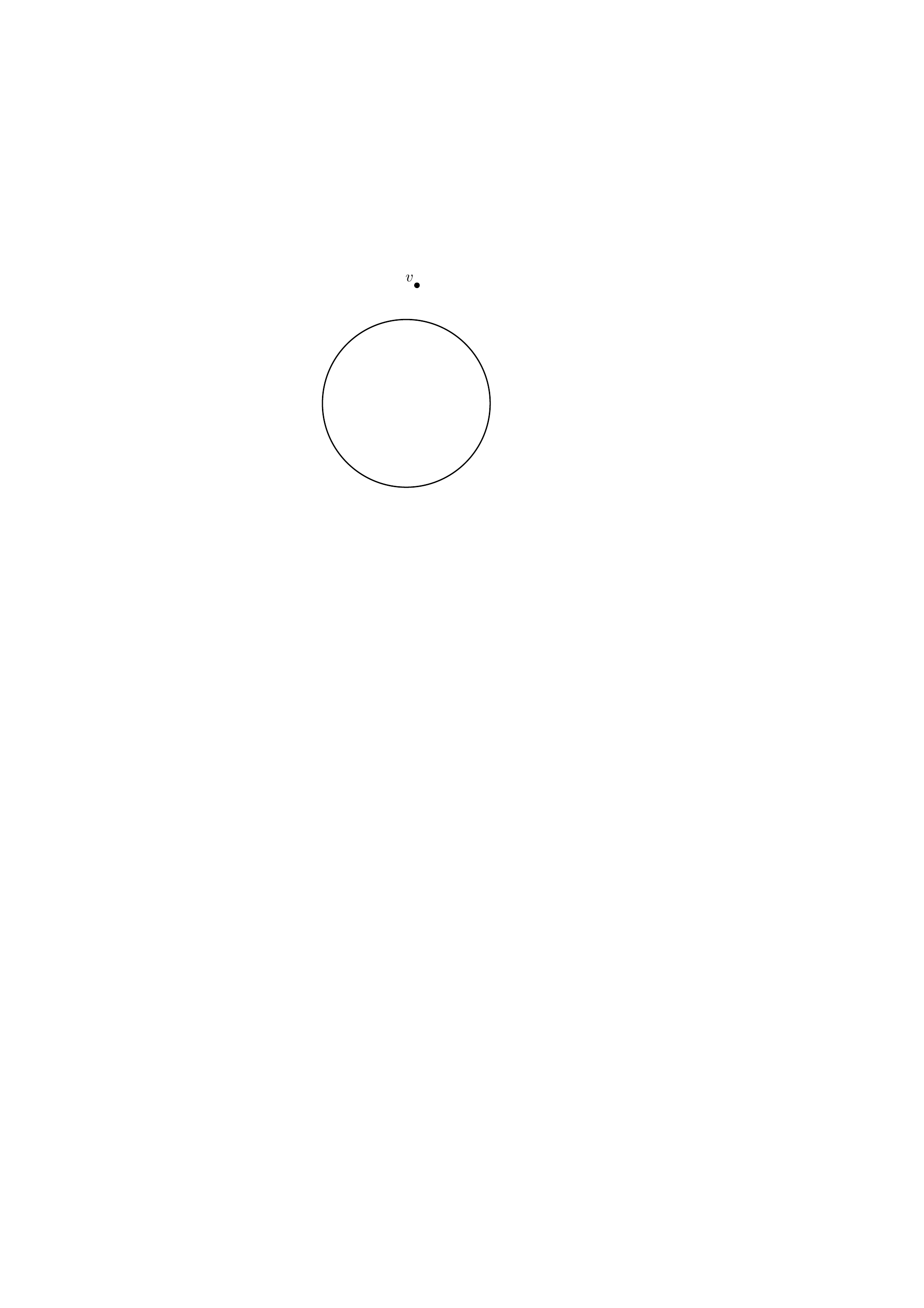}
\caption{An illustration of the cyclic half-circles $\HH$.}
\label{Fig:Intervals}
\end{figure}

  We claim that
  there is a bijection between $\BB_{C^+}(v)$ and the $(k-1)$-cliques
  in $\HH$.  Indeed, if the intersection of the half-circles of a
  clique~$K$, viewed as a subset of $C$, is $I_K$, then $I_K$ is the interval
  of $C$ which is reachable from $v$ by crossing the circles corresponding to
  the half-circles of $K$. If $F$ is a cell from $C^-$ at distance $k$
  from $v$, then $C$ and a subset of $k-1$ additional circles have to be
  crossed to reach $v$ from $F$, i.e., there is a $(k-1)$-clique in
  $\HH$ whose intersection is $F\cap C$. The number of
  $(k-1)$-cliques in $\HH$ is at most $k$ by Lemma~\ref{Lem:cyclic_intervals}.
\end{proof}

\medskip
\noindent{\bf Claim.} For $k \ge 1$, we have
$|\FF_k(C^+)| \le 2e \cdot k(k+1)n$.

\begin{proof}
  By definition, the set $\FF_k(C^+)$ is the set
  of pairs $(F,v)$ such that 
  $v \in C^+$ is in the $(\le k)$-zone of~$C$
  and $F \in \BB_{C^+}(v)$. 
  As already noted in Section~\ref{ssec:arrangements_greatcircles},
  the $(\le k)$-zone contains at most $4e \cdot (k+1)n$ vertices of $\CC$
  and at most $2e \cdot (k+1)n$ vertices in~$C^+$.
   From the above claim we have
  $|\BB_{C^+}(v)|\leq k$, hence we conclude that
  $|\FF_k(C^+)| \le 2e \cdot k(k+1)n$.
\end{proof}

To include the case $k=0$ we relax the bound to
$|\FF_k(C^+)| \le 2e \cdot (k+1)^2n$.
Since $C$ was chosen arbitrarily among all great-circles from $\CC$
and $C^+$ was chosen arbitrarily among the two hemispheres of~$C$,
the upper bound from the above claim holds for any induced hemisphere of $\CC$.
For the union $\FF_k$ of the $\FF_k(C^+)$ over all 
the $2n$ choices of the hemisphere $C^+$,
we have
\[
    |\FF_k| \le \sum_{\text{$C^+$ hemisphere}} |\FF_k(C^+)| \le 4e  (k+1)^2 n^2. 
\]

\begin{proof}[Proof of Theorem \ref{Thm:average_k-level}]
  The $k$-level with the southpole chosen
  in cell~$F$ consists of the vertices at
  distance $k$ from~$F$.  Thus, the expected complexity of the
  $k$-level when choosing $F$ uniformly at random equals 
  $|\FF_k|$ divided by the number of cells.  Since the number of
  cells in an arrangement of $n$ great-circles is $2\binom{n}{2}+2$
  and $|\FF_k| \le 4e  (k+1)^2  n^2$,
  we can conclude the statement from 
  \[
    \frac{4e\cdot (k+1)^2 \cdot n^2}{2\binom{n}{2}+2} 
    \le 4e \cdot (k+1)^2 \cdot \frac{n}{n-1}
    \le 4e \cdot (k+2)^2 \cdot \underbrace{\frac{k+1}{k+2} \cdot \frac{n}{n-1}}_{\le 1}.  
     \qedhere 
  \]
\end{proof}

\section{Proof of Theorem~\ref{thm:random_circles}}
\label{sec:proof_great_circles}
\def\vol{\text{\rm Vol}}

Let $\CC$ be a simple arrangement of $n$ great-$(d-1)$-spheres on the unit sphere
$\SS^d= \{ x \in \RR^{d+1} : \|x\| = 1\}$ with center ${\bf o}=(0,\ldots,0)$ in
$\RR^{d+1}$. For a vertex $v$ of the arrangement, let $\phi_\CC(v)$
denote the number of great-$(d-1)$-spheres of $\CC$ that are crossed by the geodesic
arc from $v$ to the south-pole ${\bf s}=(0,\ldots,0,-1)$ of the
sphere. The set of vertices $v$ of $\CC$ with $\phi_\CC(v)=k$ is
denoted $V_k(\CC)$.

When $\CC$ is projected to a $d$-dimensional plane $H$ with the origin
${\bf o}$ as center of projection, we obtain an arrangement $\AA$
of hyperplanes in $\RR^d$.  Moreover, if the south pole $\bf s$ is projected
to a point ``at infinity'' of $H$, say to $(0,\ldots,0,-\infty)$, then, for every point
$p$ in $\SS^d$, the circle in $\SS^d$ containing the geodesic arc from $p$ to $\bf{s}$ is projected to the
``vertical'' line through $p$, i.e., the line $p+(0,\ldots,0,\lambda)$.
The geodesic is projected to one of the two rays starting from $p$ on this line. In particular, all
vertices $v$ of $\CC$ with $\phi_\CC(v)=k$ are projected to vertices of $\AA$
either at level $k$ or $n-k-d$.

Let $\CC$ be an arrangement of randomly chosen great-$(d-1)$-spheres and let 
$\BB$ be a subset of size $d$ in $\CC$. Note that with probability $1$, the random
great-sphere-arrangement is in general position, and simple, i.e., no more than $d$ great-spheres
intersect in a common point. Choose $p'$ as one of the
two intersection points of the great-$(d-1)$-spheres in $\BB$.  Now consider the
arrangement $\CC' = \CC - \BB$ and note that $(\CC',p')$ can be viewed as a
random arrangement of great-$(d-1)$-spheres together with a random point on
$\SS^d$.  Hence, to estimate the expected size of $V_k(\CC)$, we can 
estimate the probability that $\phi_{\CC'}(p')=k$. This is the purpose of the
following lemma.
\begin{lemma}
\label{lemma:random_circles_intersect_arc}
Let $\CC$ be an arrangement of $n$ great-$(d-1)$-spheres chosen uniformly at random on
the unit sphere $\SS^d$ (embedded in $\RR^{d+1}$ and centered at the origin).  
Let $p$ be an additional point chosen uniformly at random from $\SS^d$, and
let $A$ be the geodesic arc from $p$ to the south pole 
on $\SS^d$.
For all $k \le n/2$, the probability~$q_k$ that exactly $k$ great-$(d-1)$-spheres
from~$\CC$ intersect $A$ is in $\Theta((k+1)^{d-1}/n^d)$. More precisely, it satisfies
\[
\frac{
2^{d-1} \rho \pi(k+1)\risingfactorial{d-1}(n-k+1)\risingfactorial{d-1}}{(n+1)\risingfactorial{2d-1}}
\le q_k \le 
\min
\left \{
\frac{\rho\pi}{n+1},
\frac{\rho\pi^d(k+1)\risingfactorial{d-1}}{(n+1)\risingfactorial{d}}
\right \},
\]
where $a\risingfactorial{b}=a(a+1)\cdots(a+b-1)$ denotes the rising factorial
and 
$
\rho = \rho_d = \frac{\area_{d-1}(\SS^{d-1}) }{\area_{d}(\SS^{d})} 
= \frac{ \Gamma(\frac{d+1}{2})}{\pi^{1/2} \Gamma(\frac{d}{2})}
$ 
only depends on the dimension~$d$.
\end{lemma}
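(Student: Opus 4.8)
The plan is to reduce the event to a mixture of binomials and then evaluate the resulting Beta-type integrals. First I would establish the elementary \emph{crossing criterion}: a great-$(d-1)$-sphere with unit normal $u$ meets the geodesic arc $A$ from $p$ to $\mathbf{s}$ precisely when $p$ and $\mathbf{s}$ lie on opposite sides of the central hyperplane $u^{\perp}$, i.e.\ when $\langle p,u\rangle\langle \mathbf{s},u\rangle<0$. Indeed, the great circle through $p$ and $\mathbf{s}$ meets the great-sphere in two antipodal points, which lie $\pi$ apart along that circle, so the minor arc $A$ (of spherical length $\dist(p,\mathbf{s})\le\pi$) crosses the great-sphere at most once, and does so exactly when its endpoints are separated by $u^{\perp}$. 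Next I would compute the single-sphere crossing probability conditioned on $p$. Writing $\alpha=\dist(p,\mathbf{s})$ for the angular distance, the sign pattern $(\sgn\langle p,u\rangle,\sgn\langle \mathbf{s},u\rangle)$ depends only on the orthogonal projection of the random normal $u$ onto the $2$-plane spanned by $p$ and $\mathbf{s}$, whose direction is uniform on a circle by rotational symmetry. A two-dimensional computation (two half-circles of arclength $\pi$ whose centres are $\alpha$ apart have symmetric difference of measure $2\alpha$) then shows that a random great-sphere separates $p$ and $\mathbf{s}$ with probability exactly $\alpha/\pi$.

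Since the $n$ great-spheres are chosen independently, conditioned on $p$ the number of spheres meeting $A$ is $\mathrm{Binomial}(n,\alpha/\pi)$. Using that the polar angle $\alpha$ of a uniform point on $\SS^d$ has density $\rho(\sin\alpha)^{d-1}$ on $[0,\pi]$ (the latitude $(d-1)$-sphere at angle $\alpha$ has $(d-1)$-area proportional to $(\sin\alpha)^{d-1}$, and $\rho=\area_{d-1}(\SS^{d-1})/\area_d(\SS^d)$ is the normalising constant), and substituting $t=\alpha/\pi$, I would arrive at the master formula
\[
q_k=\rho\pi\binom{n}{k}\int_0^1 t^{k}(1-t)^{n-k}(\sin\pi t)^{d-1}\,dt.
\]
All three bounds then follow by sandwiching $(\sin\pi t)^{d-1}$ and evaluating Beta integrals $B(a,b)=\int_0^1 t^{a-1}(1-t)^{b-1}\,dt$ via the identity
\[
\binom{n}{k}B(k+a,n-k+b)=\frac{(k+a-1)!\,(n-k+b-1)!}{k!\,(n-k)!}\cdot\frac{n!}{(n+a+b-1)!},
\]
after which ratios of factorials are rewritten as rising factorials.

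For the upper bounds I would first use the crude estimate $(\sin\pi t)^{d-1}\le 1$, which collapses the integral to $B(k+1,n-k+1)$ and yields $q_k\le \rho\pi/(n+1)$; then the bound $\sin\pi t\le\pi t$ gives $(\sin\pi t)^{d-1}\le\pi^{d-1}t^{d-1}$ and, through $B(k+d,n-k+1)$, the second estimate $\rho\pi^{d}(k+1)\risingfactorial{d-1}/(n+1)\risingfactorial{d}$. For the matching lower bound I would use the elementary factorising inequality $\sin\pi t\ge 2t(1-t)$ on $[0,1]$, so that $(\sin\pi t)^{d-1}\ge 2^{d-1}t^{d-1}(1-t)^{d-1}$; integrating against the kernel produces $B(k+d,n-k+d)$ and reproduces exactly the claimed lower bound $2^{d-1}\rho\pi(k+1)\risingfactorial{d-1}(n-k+1)\risingfactorial{d-1}/(n+1)\risingfactorial{2d-1}$. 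The hypothesis $k\le n/2$ guarantees that $(n-k+1)\risingfactorial{d-1}=\Theta(n^{d-1})$, so that the two-sided estimate is genuinely of order $\Theta((k+1)^{d-1}/n^{d})$.

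The main obstacle, beyond carefully justifying the reduction to the binomial mixture, is the choice of elementary sandwiching inequalities for $(\sin\pi t)^{d-1}$ that reproduce the \emph{exact} rising-factorial constants rather than mere asymptotics. In particular, the lower estimate hinges on the factorising inequality $\sin\pi t\ge 2t(1-t)$, which is precisely what turns the lower bound into a clean Beta integral symmetric in $t$ and $1-t$; verifying this inequality on the whole interval and confirming that the resulting integral matches the stated constant is the delicate step, while the remaining manipulations are routine factorial bookkeeping.
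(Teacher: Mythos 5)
Your proposal is correct and follows essentially the same route as the paper: condition on the polar angle $\phi$ of $p$ (density $\rho\sin^{d-1}\phi$), observe that the number of crossing great-spheres is binomial with parameter $\phi/\pi$, and sandwich the $\sin^{d-1}$ factor to reduce everything to Beta integrals, which yields exactly the three stated bounds. The only cosmetic difference is in the lower bound, where you use the single inequality $\sin\pi t\ge 2t(1-t)$ in place of the paper's splitting of the integral at $\pi/2$; the two are equivalent (your inequality follows from the same two Jordan-type piecewise bounds) and produce the identical Beta integral $B(k+d,\,n-k+d)$ and hence the same constant.
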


\begin{proof}   
Denote by $\phi$ the length of the geodesic arc $A$ on $\SS^d$
from~$p$ to $\bf s$, i.e., $\phi$ is the angle between the two rays 
emanating from $\bf o$ towards $\bf s$ and $p$.
Note that -- independent from the dimension~$d$ -- 
the three points $\bf o$, $\bf s$, and $p$ lie in a 2-dimensional plane
which also contains the geodesic arc~$A$.

Point $p$ lies on a $(d-1)$-sphere $C$  
of radius $\sin(\phi)$ in the $d$-dimensional hyperplane defined by the equation $x_d=-\cos(\phi)$.
Figure~\ref{fig:sphere} gives an illustration for the case $d=2$,
where $C$ is a circle.

\begin{figure}[htb]
\centering
\includegraphics[scale=0.8]{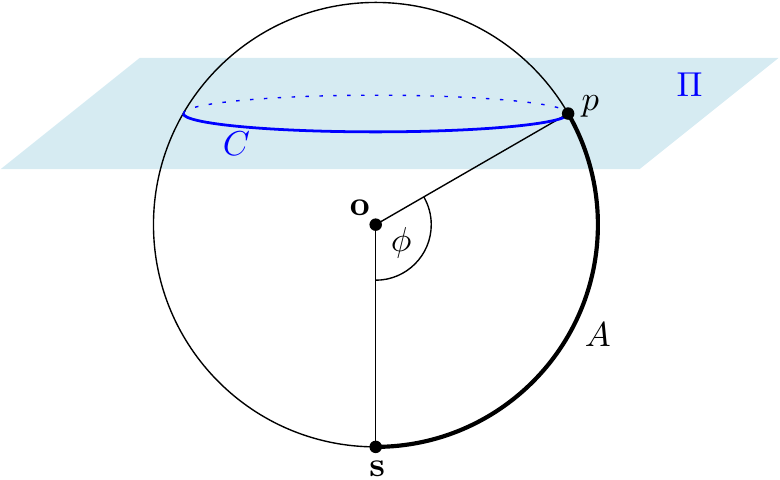}
\caption{Illustrating the definitions of $A$, $C$, and $\Pi$ depending on $p$.} 
\label{fig:sphere}
\end{figure}

The probability that the arc $A$ defined by the random point $p$ is intersected by exactly
$k$ great-$(d-1)$-spheres from the random arrangement~$\CC$ is 
\begin{align*}
q_k = 
\int_{\phi = 0}^{\pi} 
\underbrace{
\frac{\vol_{d-1}(\SS^{d-1}) \sin^{d-1}(\phi)}{\vol_{d}(\SS^{d})} }_\text{density at angle $\phi$}
\cdot
\underbrace{\binom{n}{k}(\phi / \pi)^k (1-\phi/\pi)^{n-k}}_{\text{chosen great-}(d-1)\text{-spheres intersect $A$}}
d\phi.
\end{align*}

\noindent
This can be rewritten as
\[
  q_k = \rho\cdot \binom{n}{k} \cdot \int_{\phi = 0}^{\pi} \sin^{d-1}(\phi) \cdot
  (\phi / \pi)^k (1-\phi/\pi)^{n-k} d\phi,
\]
where
$\rho = \rho(d) = \frac{\vol_{d-1}(\SS^{d-1}) }{\vol_{d}(\SS^{d})} = \frac{
  \Gamma(\frac{d+1}{2})}{\pi^{1/2} \Gamma(\frac{d}{2})}$ is a constant only depending on~$d$. 
The latter equation
follows from $ \vol_{d}(\SS^d) = 2\pi^{\frac{d+1}{2}}/\Gamma(\frac{d+1}{2})$, where
$\Gamma(x)$ is the Euler gamma function (see e.g.\ \cite{wiki:N-sphere}).

\goodbreak
\medskip

In the following we give upper and lower bounds for $q_k$.
The Euler beta function $B$ 
turns out to be the tool to evaluate the integrals:
\[
B(a+1,b+1) =
\int_{t = 0}^{1} t^{a} (1-t)^{b} dt =
\frac{a!\cdot b!}{(a+b+1)!}.
\]
For this identity and more information see for example~\cite{wiki:Beta_function}.
\medskip

To show the first upper bound on $q_k$, we bound the integral above as follows:
Since  $\sin(\phi) \le 1$ holds for every $\phi \in [0,\pi]$,
we have 
\begin{align*}
q_k &\le \rho \binom{n}{k}
\int_{\phi = 0}^{\pi} 
(\phi / \pi)^k (1-\phi/\pi)^{n-k}
d\phi
= 
\rho \pi 
\binom{n}{k}
\int_{t = 0}^{1} 
t^{k} (1-t)^{n-k}
dt\\
&=
\rho \pi 
\binom{n}{k}B(k+1,n-k+1)
=
\rho \pi 
\cdot
\frac{n!}{k!(n-k)!}
\cdot
\frac{k!(n-k)!}{(n+1)!}
=
\rho \pi
\cdot
\frac{1}{n+1}.
\end{align*}

\medskip
\noindent
Towards the second upper bound on $q_k$,
we use the fact that $\sin(\phi) \le \phi$ holds for every $\phi \in [0,\pi]$:
\begin{align*}
q_k 
&\le 
\rho \pi^{d-1}
\binom{n}{k}
\int_{\phi = 0}^{\pi} 
(\phi / \pi)^{k+d-1} (1-\phi/\pi)^{n-k}
d\phi\\
&=
\rho \pi^d
\binom{n}{k}
\int_{t = 0}^{1} 
t^{k+d-1} (1-t)^{n-k}
dt\\
&=
\rho \pi^d
\cdot
\frac{n!}{k!(n-k)!}
\cdot
\frac{(k+d-1)!(n-k)!}{(n+d)!}
=
\rho \pi^d
\cdot
\frac{(k+1)\risingfactorial{d-1}}{(n+1)\risingfactorial{d}}.
\end{align*}

\medskip
\noindent
To show the lower bound on $q_k$, we split the integral in two parts:
Since  $\sin(\phi) \ge 2 \cdot \frac{\phi}{\pi}$ 
holds for every $\phi \in [0,\pi/2]$ and
$\sin(\phi) \ge 2\cdot (1-\frac{\phi}{\pi})$ holds for every $\phi \in [\pi/2,\pi]$,
we have 
\begin{align*}
q_k 
&\ge 
2^{d-1} \rho
 \binom{n}{k} 
\left[
\int_{\phi = 0}^{\pi/2} 
(\phi / \pi)^{k+d-1} (1-\phi/\pi)^{n-k}
d\phi
 +
\int_{\phi = \pi/2}^{\pi} 
(\phi / \pi)^{k} (1-\phi/\pi)^{n-k+d-1}
d\phi
\right]
\\
& \ge
2^{d-1} \rho
 \binom{n}{k} 
\int_{\phi = 0}^{\pi} 
(\phi / \pi)^{k+d-1} (1-\phi/\pi)^{n-k+d-1}
d\phi\\
& =
2^{d-1} \rho
{\pi}
\binom{n}{k}
\int_{t = 0}^{1} 
t^{k+d-1} (1-t)^{n-k+d-1}
dt
\\
&=
2^{d-1} \rho
{\pi}
\cdot
\frac{n!}{k!(n-k)!}
\cdot
\frac{(k+d-1)!(n-k+d-1)!}{(n+2d-1)!}
\\
&=
\frac{
2^{d-1} \rho\pi(k+1)\risingfactorial{d-1}(n-k+1)\risingfactorial{d-1}}{(n+1)\risingfactorial{2d-1}}.
\end{align*}
This completes the proof of Lemma~\ref{lemma:random_circles_intersect_arc}.
\end{proof}

\begin{proof}[Proof of Theorem~\ref{thm:random_circles}]
Consider an arrangement $\CC$ of $n+d$ great-$(d-1)$-spheres $C_1,\ldots,C_{n+d}$ 
chosen uniformly and independently at random from $\SS^d$.
Let $p$ be a vertex of $\CC$ chosen uniformly at random
(i.e., one of the two points of intersection of $d$ great-$(d-1)$-spheres
$C_{i_1},\ldots,C_{i_d}$ chosen u.a.r.\ from $\CC$). 
Note that $p$ is a u.a.r.\ chosen point from $\SS^d$.

We now apply Lemma~\ref{lemma:random_circles_intersect_arc}
with $p$ and $\CC_p := \CC - \{C_{i_1},\ldots,C_{i_d}\}$. 
Point $p$ is separated from
$\bf s$ by $k$ great-$(d-1)$-spheres from $\CC_p$ with probability $q_k = \Theta(k^{d-1}/n^d)$. 
Since $p$ is chosen uniformly at random among the $2\binom{n+d}{d}$ vertices of $\CC$, 
we obtain the desired bound of $\Theta(k^{d-1})$ for the number of
vertices at distance $\leq k$ from $p$. Rotating the sphere such
that $p$ becomes the southpole does not affect the fact that
the arrangement of circles not incident to $p$ is a sample
from the uniform dirtribution.

\end{proof}

\section{Discussion}

Theorem~\ref{Thm:average_k-level} is about arrangements of
great-circles. All the elements of the proof, however,
carry over to great-pseudocircles whence the result could also be
stated for arrangements of great-pseudocircles. Projective
arrangements of lines are obtained by antipodal identification
from arrangements of great-circles. Hence, if you pick a cell u.a.r.\ in a
projective arrangement of lines (pseudo-lines) the the expected
number of vertices at distance $k$ from the cell is as in 
Theorem~\ref{Thm:average_k-level}. If the projection $\Psi_\Pi$ is
used to project an arrangements $\CC$ of great-pseudocircles to an
Euclidean arrangement $\LL$ on $\Pi$ such that the south-poles 
coincide, then the $k$-level of $\CC$ corresponds to the union of the
$k$- and the $(n-k-2)$-level of $\LL$.

\medskip
With respect to lower bounds we would like to know
the answer to:
\begin{question}
Is there a family of  arrangements  where 
the expected size of the middle level is superlinear 
when the southpole is chosen uniformly at random?
\end{question}

Recursive constructions from \cite{EdelsbrunnerWelzl1985} and
\cite{ErdosLSS1973} show that the size of the $(n/2-s)$-level can be in
$\Omega(n \log n)$ for any fixed $s$.  Nevertheless computer
experiments suggest that if we choose a random southpole for these
examples the expected size of the middle level drops to be linear.
\medskip

Theorem~\ref{thm:random_circles} deals with the average size of the
$k$-level in arrangements of randomly chosen great-circles. In our
model, great-circles are chosen independently and uniformly at random
from the sphere.  Since point sets, line arrangements, and
great-circle arrangements are in strong correspondence the bound from
Theorem~\ref{thm:random_circles} also applies to $k$-sets in point
sets and $k$-levels of line arrangements from a specific random
distribution.

In the context of Erd\H{o}s--Szekeres-type problems, several articles
made use of point sets which are sampled uniformly at random from a
convex shape
\cite{BaranyFueredi1987,Valtr1995,BaloghGAS2013,BalkoScheucherValtr2019}.
Also the average size of the convex hull ($0$-level) is well-studied
for sets of points which are sampled uniformly at random from a convex
shape~$K$.  If $K$ is a disk, the convex hull has expected size
$O(n^{1/3})$, and if $K$ is a convex polygon with $k$ sides, the
expected size is $O(k\log n)$
\cite{HarPeled2011,PreparataShamos1985,Raynaud1970,RenyiSulanke1963}.
In particular, the expected size of the convex hull is not constant,
which is a substantial contrast to our setting.  In fact, our setting
appears to be closer to the setting of random order types, for which
the expected size of the convex hull was recently shown to be $4+o(1)$
\cite{GoaocWelzl2020}.  Hence it would be very interesting to obtain
bounds on the average number of $k$-sets also in this setting.  Last
but not least, Edelman \cite{Edelman1992} showed that the expected
number of $k$-sets of an allowable sequence is of order
$\Theta(\sqrt{kn})$.

\bibliography{references}
\bibliographystyle{alphaabbrv-url}

\end{document}